\definecolor{beamer@blendedblue}{rgb}{0.2,0.2,0.7}
\newtheorem{definition}{Definition}
\newtheorem{proposition}[definition]{Proposition}
\newtheorem{lemma}[definition]{Lemma}
\mathchardef\ordinarycolon\mathcode`\:
\def\vcentcolon{\mathrel{\mathop\ordinarycolon}}
\DeclareFontFamily{U}{mathx}{\hyphenchar\font45}
\DeclareFontShape{U}{mathx}{m}{n}{<-> mathx10}{}
\DeclareSymbolFont{mathx}{U}{mathx}{m}{n}
\DeclareMathAccent{\widebar}{0}{mathx}{"73}
\newcommand{\wh}[1]{\widehat{#1}}
\newcommand{\ket}[1]{\vert{#1}\rangle}
\newcommand{\bra}[1]{\langle{#1}\vert}
\newcommand{\ketbra}[1]{\vert{#1}\rangle\!\langle{#1}\vert}
\newcommand{\braket}[2]{\langle #1\vert #2\rangle}
\newcommand\proj[1]{\vert{#1}\rangle\!\langle{#1}\vert}
\newcommand{\density}[1]{\mathscr{D}(#1)}
\DeclareMathOperator{\tr}{Tr}  
\newcommand{\1}{\mathbbm{1}}
\newcommand{\ox}{\otimes}
\newsavebox{\@brx}
\newcommand{\llangle}[1][]{\savebox{\@brx}{\(\m@th{#1\langle}\)}%
  \mathopen{\copy\@brx\kern-0.5\wd\@brx\usebox{\@brx}}}
\newcommand{\rrangle}[1][]{\savebox{\@brx}{\(\m@th{#1\rangle}\)}%
  \mathclose{\copy\@brx\kern-0.5\wd\@brx\usebox{\@brx}}}
\newcommand*{\cM}{\mathcal{M}}
\newcommand*{\cV}{\mathcal{V}}
\newcommand{\bS}{\mathbb{S}}
\definecolor{wildstrawberry}{rgb}{1.0, 0.26, 0.64}
\definecolor{googleblue}{HTML}{4285F4}
\definecolor{googlered}{HTML}{DB4437}
\definecolor{googleyellow}{HTML}{F4B400}
\definecolor{googlegreen}{HTML}{0F9D58}
\definecolor{klevinblue}{HTML}{002FA7}
\definecolor{tiffanyblue}{HTML}{0ABAB5}
\begin{document}

\newcommand{\thetitle}{{GHZ-W Genuinely Entangled Subspace Verification with Adaptive Local Measurements}}
\title{\thetitle}

\author{Congcong Zheng}%
\affiliation{State Key Lab of Millimeter Waves, Southeast University, Nanjing 211189, China}%
\affiliation{Frontiers Science Center for Mobile Information Communication and Security, Southeast University, Nanjing 210096, People's Republic of China}%

\author{Ping Xu}%
\affiliation{Institute for Quantum Information \& State Key Laboratory of High Performance Computing, %
College of Computer Science and Technology, National University of Defense Technology, %
Changsha 410073, China}%
\affiliation{Hefei National Laboratory, Hefei 230088, China}%

\author{Kun Wang}
\thanks{Corresponding author: \href{nju.wangkun@gmail.com}{nju.wangkun@gmail.com}}%
\affiliation{Institute for Quantum Information \& State Key Laboratory of High Performance Computing, %
College of Computer Science and Technology, National University of Defense Technology, %
Changsha 410073, China}%

\author{Zaichen Zhang}%
\thanks{Corresponding author: \href{zczhang@seu.edu.cn}{zczhang@seu.edu.cn}}%
\affiliation{Frontiers Science Center for Mobile Information Communication and Security, Southeast University, Nanjing 210096, People's Republic of China}%
\affiliation{Purple Mountain Laboratories, Nanjing 211111, People's Republic of China}%
\affiliation{National Mobile Communications Research Laboratory, Southeast University, Nanjing 210096, China}%

\begin{abstract}
Genuinely entangled subspaces (GESs) are valuable resources in quantum information science. 
Among these, the three-qubit GHZ-W GES,
spanned by the three-qubit Greenberger-Horne-Zeilinger (GHZ) and W states, 
is a universal and crucial entangled subspace resource for three-qubit systems. 
In this work, we develop two adaptive verification strategies, the XZ strategy and the rotation strategy, for the three-qubit GHZ-W GES
using local measurements and one-way classical communication. 
These strategies are experimentally feasible, efficient and possess a concise analytical expression for the sample complexity of the rotation strategy, 
which scales approximately as $2.248\epsilon^{-1}\ln\delta^{-1}$, where $\epsilon$ is the infidelity and $1-\delta$ is the confidence level. 
Furthermore, we comprehensively analyze the two-dimensional two-qubit subspaces and classify them into three distinct types, 
which include unverifiable entangled subspaces, revealing intrinsic limitations in local verification of entangled subspaces.
\end{abstract}
\date{\today}
\maketitle

\section{Introduction}

Quantum entanglement, a fundamental aspect of quantum physics, has garnered significant attention in the realm of quantum information science~\cite{horodecki2009quantum}. 
Among the most prominent examples of multipartite entanglement are the Greenberger-Horne-Zeilinger (GHZ) and W states, which serve as paradigmatic instances~\cite{horodecki2009quantum, horodecki2024multipartite, chen2024quantumb}. 
Three-qubit entanglement is notably classified into two distinct types, represented by the GHZ and W states~\cite{dur2000three, acin2001classification, walther2005local}.  
A key area of research in multipartite entanglement focuses on subspaces composed entirely of entangled states, known as the completely entangled subspaces (CESs).
These subspaces have proven invaluable in applications such as quantum error correction~\cite{alsina2021absolutely, gour2007entanglement, huber2020quantuma, raissi2018optimal} and quantum cryptography~\cite{shenoy2019maximally}.
A particularly important class of completely entangled subspaces is the genuinely entangled subspace (GES), which consists solely of genuinely multipartite entangled states~\cite{demianowicz2022universal, baccari2020deviceindependent, makuta2021selftesting, demianowicz2018unextendible, demianowicz2021simple, demianowicz2019entanglement}.
A notable example of a GES is the GHZ-W subspace, spanned by the GHZ and W states~\cite{makuta2021selftesting}. 
This subspace is also an entangled symmetric subspace, where symmetry plays a fundamental role in quantum information theory \cite{li2020selftesting, martin2010multiqubit}, underlying many key applications. 
The GHZ-W subspace has garnered significant attention due to its foundational role in quantum teleportation~\cite{park2008role, chakrabarty2010teleportationa} and its broader implications in multipartite entanglement theory~\cite{lohmayer2006entangleda, wang2015classifying, xie2023evidence, faujdar2023nonlocality}. 
Notably, the three-qubit GHZ-W subspace serves as a universal resource for three-qubit entangled symmetric states \cite{zheng2020deterministic}, 
which has been extensively studied in~\cite{acin2001classification, lohmayer2006entangleda, li2020selftesting}. 

However, experimentally constructing three-qubit entangled symmetric resources remains challenging due to the pervasive influence of quantum noise. 
Consequently, accurately detecting the entanglement of the GHZ-W subspace has become a critical task in quantum information science. 
Quantum tomography, the standard method for characterizing entire quantum systems, provides comprehensive insight but is highly resource intensive~\cite{Hffner2005,PhysRevLett.129.133601}.
To address this limitation, numerous resource-efficient methods have been developed to certify quantum systems~\cite{eisert2020quantum, kliesch2021theory, huang2020predicting, elben2020crossplatforma, elben2023randomized, zheng2024crossplatform}. 
Notably, quantum state verification~\cite{pallister2018optimala, wang2019optimala, hayashi2009group, zhu2019optimal, yu2019optimal, zhu2019efficienta, yu2022statisticala, li2020optimal, han2021optimal, dangniam2020optimal, zhu2019efficientb, chen2025quantum, li2021verification, zhu2019general} aims to confirm whether quantum states are prepared as intended, with experimental implementations demonstrating its effectiveness~\cite{jiang2020towards, Zhang_2020, Xia_2022}. 
These verification strategies primarily use local operations and classical communication (LOCC)~\cite{Chitambar2014} to detect entangled states.
Naturally, certifying entangled subspaces, particularly GESs, has emerged as a critical task in quantum information science. 
However, entanglement certification within a subspace is inherently complex because of the structural intricacies of quantum subspaces. 
Recently, several approaches have been proposed to tackle this challenge, including subspace self-testing~\cite{baccari2020deviceindependent, makuta2021selftesting} and subspace verification~\cite{zheng2024efficient, chen2024quantuma, chen2023efficient, zhu2024efficient, fujii2017verifiable}.

In this work, building on the general framework of quantum subspace verification~\cite{zheng2024efficient, chen2024quantuma, chen2023efficient, zhu2024efficient, fujii2017verifiable}, 
our objective is to construct efficient strategies to verify the three-qubit GHZ-W subspace, which has not been considered in previous works. 
The verification or self-testing of GHZ states~\cite{hayashi2018secure, hayashi2022quantum, li2020optimal, han2021optimal}, W states~\cite{liu2019efficient}, and their superpositions~\cite{li2020selftesting} have been considered before.
However, the verification of the GHZ-W subspace is highly nontrivial, as it is inherently more complex than individual entangled states. 
To address this challenge, we construct verification strategies using one-way adaptive local measurements. 
Specifically, we measure one qubit and then adaptively adjust the second measurement conditioned on the measurement outcome.
This approach is intuitive, as it reduces the problem to verifying a much simpler two-qubit subspace.
We comprehensively analyze two-dimensional two-qubit subspaces and classify them into three distinct types: 
\emph{unverifiable}, \emph{verifiable}, and \emph{perfectly verifiable} subspaces. 
In particular, we prove that unverifiable subspaces cannot be certified by any LOCC strategy that relies on projective measurements. 
For the remaining two categories, we propose tailored verification strategies.
Building on these results, we develop two adaptive verification strategies for the three-qubit GHZ-W subspace, 
the XZ strategy and the rotation strategy, using local measurements and one-way classical communication.
The XZ strategy requires only four measurement settings.
The rotation strategy uses ten measurement settings, but achieves higher efficiency than the XZ strategy.
Notably, our strategies are experimentally friendly, requiring only LOCC, making them suitable for current quantum systems. 
Furthermore, our strategies offer new tools for analyzing the entanglement structure of mixed states within entangled subspace, contributing to the broader goal of certifying entanglement in realistic, noisy quantum systems.

\textbf{Related Works.}
For a self-contained review, we summarize the related works as follows:
\begin{itemize}
\item \emph{State Verification.} 
Sample-optimal strategies have been developed for a variety of pure quantum states, including two-qubit states~\cite{pallister2018optimala, wang2019optimala}, bipartite maximally entangled states~\cite{hayashi2009group, zhu2019optimal}, general bipartite pure states~\cite{yu2019optimal}, GHZ states~\cite{li2020optimal, han2021optimal}, stabilizer states~\cite{dangniam2020optimal}, and antisymmetric basis states~\cite{li2021verification}.
Efficient protocols have also been proposed for verifying other pure states such as hypergraph states~\cite{zhu2019efficientb}, Dicke and W states~\cite{liu2019efficient}, phased Dicke states~\cite{li2021verification}, Affleck-Kennedy-Lieb-Tasaki states~\cite{chen2023efficient}, and even arbitrary entangled pure states~\cite{liu2023efficient}.
Many of these protocols leverage adaptive measurements~\cite{liu2019efficient, yu2019optimal, li2021verification, liu2023efficient}.
\item \emph{Subspace Verification.} 
The concept of verifying quantum subspaces was first introduced in~\cite{fujii2017verifiable}, aiming to verify the output of fault-tolerant quantum computation in measurement-based quantum computation. 
It was later extended to verify ground states of local 
Hamiltonians~\cite{zhu2024efficient, chen2023efficient}. 
More recently, subspace verification has been applied to verify the implementation of quantum error correction codes~\cite{zheng2024efficient, chen2024quantuma}. 
\end{itemize}

The remainder of this paper is organized as follows. 
In Section~\ref{sec:subspace verification}, we provide a brief overview of the subspace verification framework. 
Section~\ref{sec:verification of two-qubit subspace} focuses on the classification and verification of two-qubit subspaces. 
In Section~\ref{eq:ghz-w subspace verification}, we present two efficient verification strategies for the three-qubit GHZ-W subspace.


\section{Subspace verification}
\label{sec:subspace verification}
Let us first review the framework for the statistical verification of the quantum subspace~\cite{zheng2024efficient, chen2024quantuma, chen2023efficient, zhu2024efficient, fujii2017verifiable}.
Suppose that our objective is to prepare target states within a subspace $\cV$, but in practice
we obtain a sequence of states $\sigma_1, \cdots, \sigma_N$ in $N$ runs. 
Let $\density{\cV}$ be the set of density operators acting on $\cV$
and $\Pi$ be the projector onto $\cV$. 
Our task is to distinguish between the following two cases:
\begin{enumerate}
    \item \textbf{Good}: 
    for all $i\in[N]$, $\tr[\Pi\sigma_i] = 1$;
    \item \textbf{Bad}: 
    for all $i\in[N]$, $\tr[\Pi\sigma_i] \leq 1 - \epsilon$ for some fixed $\epsilon$. 
\end{enumerate}
To achieve this, assume that we have access to a set of POVM elements $\cM$. 
Define a probability mass $\mu: \cM \to [0, 1]$, satisfying $\sum_{M\in\cM} \mu(M) = 1$. 
For each state, we select a POVM element $M\in\cM$ with probability $\mu(M)$ 
and perform the corresponding POVM with two results $\{M, \1-M\}$, where the $M$ outputs ``pass'' and the $\1-M$ outputs ``fail''. 
The operator $M$ is called a \emph{test operator}. 
To ensure that all states in the target subspace pass the test, we require $\tr[M\rho]=1$ for all $\rho\in\density{\cV}$ and $M\in\cM$.
The sequence of states passes the verification procedure if all outcomes are ``pass''.

Mathematically, we can characterize the verification strategy by the \emph{verification operator}, defined as 
\begin{align}
    \Omega = \sum_{M\in\cM} \mu(M) M. 
\end{align}
If $\tr[\Pi\sigma_i]$ is upper bounded by $1-\epsilon$, the maximal probability that $\sigma_i$ passes each test is~\cite{fujii2017verifiable, chen2023efficient, zhu2024efficient}
\begin{align}
    \max_{\sigma:\tr[\Pi\sigma]\leq1-\epsilon}\tr[\Omega\sigma]
    = 1 - (1-\lambda_{\max}(\wh{\Omega}))\epsilon,
\end{align}
where $\wh{\Omega}:=(\1-\Pi)\Omega(\1-\Pi)$ is the 
projected effective verification operator and $\lambda_{\max}(X)$ denotes the maximum eigenvalue of the Hermitian operator $X$.
If the states are independently prepared, the probability of passing $N$ tests is bounded by~\cite{hayashi2009group}
\begin{align}
    \prod_{i=1}^N \tr[\Omega\sigma_i] \leq (1 - \nu(\Omega)\epsilon)^N, 
\end{align}
where $\nu(\Omega) := 1 - \lambda_{\max}(\wh{\Omega})$ is the \emph{spectral gap}. 
To achieve a confidence level of $1-\delta$, the minimum required number of state copies is given by 
\begin{align}\label{eq:sample-complexity}
    N(\Omega) 
    = \left\lceil\frac{1}{\ln(1-\nu(\Omega)\epsilon)^{-1}}\ln\frac{1}{\delta}\right\rceil
    \approx \left\lceil\frac{1}{\nu(\Omega)}\times\frac{1}{\epsilon}\ln\frac{1}{\delta}\right\rceil. 
\end{align}
This equality provides a guide for the construction of efficient verification strategies by maximizing $\nu(\Omega)$. 
If there is no restriction on measurements, 
the globally optimal strategy is achieved by simply performing 
the projective measurement $\{\Pi,\1-\Pi\}$, 
which produces $\nu(\Pi) = 1$ and
\begin{align}\label{eq:global-strategy}
N_G(\Pi) = \left\lceil\frac{1}{\epsilon}\ln\frac{1}{\delta}\right\rceil.
\end{align}
 
However, implementing the globally optimal strategy requires highly entangled measurements 
if the target subspace is genuinely entangled, which are experimentally challenging. 
Consequently, we focus on the verification of the subspace under the locality constraint, 
where each test operator $M$ is a local projector. 
These strategies significantly improve experimental feasibility while still enabling efficient verification of the target subspace.

\section{Two-qubit subspace verification}
\label{sec:verification of two-qubit subspace}

For the three-qubit target subspace to be verified, measuring one qubit naturally projects the remaining two qubits into a two-qubit subspace, 
conditioned on the measurement outcome. 
Therefore, we begin by discussing the verification of two-qubit subspaces. 
Remarkably, two-dimensional two-qubit subspaces can be classified into three distinct types, each characterized by its unique properties. 
The basic properties of two-dimensional subspaces of two-qubits were clarified in~\cite{englert2002kinematics}. 
In particular, we demonstrate that there exists a two-dimensional two-qubit subspace that cannot be verified by any strategy using LOCC.
For the remaining two categories, we propose tailored verification strategies and elaborate their corresponding efficiencies.

\subsection{When a two-qubit subspace is verifiable?}

First, we identify the types of subspaces that can be verified. 
Intuitively, a subspace is verifiable if its complement subspace can be spanned by product states; 
otherwise, it cannot be verified.
Now, consider a two-dimensional subspace $\cV$ of a two-qubit Hilbert space, its complement subspace is denoted by $\cV^\bot$.
A key insight from quantum state verification is that product states in the complement subspace should be identified first~\cite{pallister2018optimala}. 
In particular, two-qubit product states can be efficiently verified using the following method.
Any two-qubit pure state $\ket{\psi}$ can be uniquely represented by a $2\times 2$ matrix $\psi$ as~\cite{wootters1998entanglementa,duan2009distinguishability}: 
\begin{align}
    \ket{\psi} = \1 \ox \psi \ket{\Phi}, 
\end{align}
where $\ket{\Phi} = (\ket{00} + \ket{11}) / \sqrt{2}$ is the maximally entangled state. 
The \emph{concurrence} of $\ket{\psi}$, defined as $C(\ket{\psi}) := |\det(\psi)| \leq 1$, quantifies its entanglement~\cite{wootters1998entanglementa}, where $\det(A)$ is the determinant of the square matrix $A$. 
Specially, if $C(\ket{\psi}) = 0$, $\ket{\psi}$ is a product state.
This criterion enables a direct method for determining the number of product states in $\cV$.
Let $\ket{\alpha}$ and $\ket{\beta}$ be two linearly independent states spanning $\cV$.
Then all states in $\cV$ can be expressed as $\ket{\alpha} + \lambda \ket{\beta}$ (unnormalized) for some $\lambda \in \mathbb{C}$.
By solving the equation $\det(\alpha+\lambda\beta)=0$, we can find all product states in $\cV$.
The following lemma shows that the number of distinct product states contained in $\cV$ is equal to that in $\cV^\bot$.

\begin{lemma}
\label{lemma:number of product states}
For any two-dimensional subspace $\mathcal{V}$ of a two-qubit Hilbert space, the number of distinct product states in $\mathcal{V}$ \emph{equals} that in $\mathcal{V}^\perp$, and is either $1$, $2$, or $+\infty$.
\end{lemma}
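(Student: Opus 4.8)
The plan is to pass to the $2\times2$ matrix picture and reduce everything to one fact about a quadratic form. Write $\mathcal V=\spn{\ket\alpha,\ket\beta}$ and let $U:=\spn{\alpha,\beta}$ be the corresponding two-dimensional subspace of matrices. As already noted, a ray $\bC(\mu\alpha+\lambda\beta)$ is a product state exactly when $\det(\mu\alpha+\lambda\beta)=0$, and $(\mu,\lambda)\mapsto\det(\mu\alpha+\lambda\beta)$ is a binary quadratic form over $\bC$. Such a form is either identically zero---then every ray in $\mathcal V$ is a product state and the count is $+\infty$---or it is a nonzero form, which over $\bC$ splits into two linear factors and hence vanishes at exactly one point of $\mathbb{CP}^1$ (a repeated factor) or at exactly two; moreover every nonzero binary form over $\bC$ has at least one zero. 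This already yields the trichotomy $\{1,2,+\infty\}$, and reduces the equality claim to comparing the number of projective zeros of $\det$ restricted to $U$ with that of $\det$ restricted to the matrix subspace representing $\mathcal V^\perp$.

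To make that comparison I would record two elementary identities. First, from $\ket\psi=\1\ox\psi\ket\Phi$ one gets $\braket{\psi}{\phi}=\tfrac12\tr(\psi^\dagger\phi)$, so the matrices representing $\mathcal V^\perp$ are precisely the Hilbert--Schmidt complement of $U$. Second, writing $b(X,Y):=\det(X+Y)-\det X-\det Y=\tr(X)\tr(Y)-\tr(XY)$ for the polar form of $\det$ (nondegenerate, since $\operatorname{adj}$ is a bijection) and $\operatorname{adj}(X):=\tr(X)\1-X$ for the $2\times2$ adjugate, a direct computation gives $b(\operatorname{adj}(X^\dagger),Y)=\tr(X^\dagger Y)$. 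Hence the Hilbert--Schmidt complement of $U$ equals the $b$-orthocomplement of $\wt U:=\{\operatorname{adj}(X^\dagger):X\in U\}$. Since $X\mapsto\operatorname{adj}(X^\dagger)$ is a conjugate-linear bijection of the matrix space satisfying $\det(\operatorname{adj}(X^\dagger))=\overline{\det X}$, it carries the det-zero rays of $U$ bijectively to those of $\wt U$. Therefore it suffices to prove the purely linear-algebraic statement: for the nondegenerate quadratic form $\det$ on the four-dimensional space of $2\times2$ matrices and any two-dimensional subspace $W$, the restrictions $\det|_W$ and $\det|_{W^{\perp_b}}$ have the same number of projective zeros (here $W^{\perp_b}$ is again two-dimensional).

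I would prove this by a short case analysis on $\dim(W\cap W^{\perp_b})=\dim\operatorname{rad}(b|_W)$. If $W\cap W^{\perp_b}=0$, the matrix space is the $b$-orthogonal direct sum $W\oplus W^{\perp_b}$, so both restrictions are nondegenerate binary forms and each has exactly two zeros. If $W=W^{\perp_b}$, i.e.\ $W$ is totally isotropic, both restrictions vanish identically and the count is $+\infty$. Otherwise $\operatorname{rad}(b|_W)=W\cap W^{\perp_b}$ is one-dimensional, so $\det|_W$ has rank one and exactly one zero; choosing $0\ne x\in W\cap W^{\perp_b}$, the inclusion $x\in W$ forces $W^{\perp_b}\subseteq x^{\perp_b}$, so $x$ also lies in the radical of $b|_{W^{\perp_b}}$ and $\det|_{W^{\perp_b}}$ is degenerate; it cannot vanish identically, for that would make $W^{\perp_b}=(W^{\perp_b})^{\perp_b}=W$ totally isotropic, contradicting this case; hence $\det|_{W^{\perp_b}}$ also has rank one and exactly one zero. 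Combining the three cases proves the polarity statement, and with the reductions above, the lemma.

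The step that needs the most care is the bookkeeping in the second paragraph: keeping the conjugate-linear twist $X\mapsto\operatorname{adj}(X^\dagger)$ straight while checking that it simultaneously identifies $\mathcal V^\perp$ with a genuine $b$-orthocomplement and preserves the count of rank-deficient rays. One also has to be sure the case analysis is exhaustive---in particular, that a degenerate $\det|_W$ can never be paired with a nondegenerate $\det|_{W^{\perp_b}}$. The remaining verifications are routine linear algebra over $\bC$.
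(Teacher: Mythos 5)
Your proof is correct, but it takes a genuinely different route from the paper's. The paper argues constructively: it invokes the fact that a two-dimensional two-qubit subspace must contain a product state, normalizes two product states of $\cV$ to $\ket{00}$ and $\ket{ab}$ by local basis changes, explicitly exhibits $\ket{1\bar b}$ and $\ket{\bar a 1}$ as the product states of $\cV^\perp$, and handles the one-product-state case by contradiction. Your argument instead reduces everything to the quadratic form $\det$ on $M_2(\bC)$: the trichotomy $\{1,2,+\infty\}$ is the classification of binary quadratic forms over $\bC$ by rank, and the equality of counts follows from the adjugate identity $b(\operatorname{adj}(X^\dagger),Y)=\tr(X^\dagger Y)$, which converts the Hilbert--Schmidt orthocomplement into the polar-form orthocomplement of a det-count-preserving image of $U$, after which a rank-of-the-radical case analysis finishes the job. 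I checked the computations ($\braket{\psi}{\phi}=\tfrac12\tr(\psi^\dagger\phi)$, $\tr(\operatorname{adj}(X)Y)=\tr X\tr Y-\tr(XY)$, $\det\operatorname{adj}(X^\dagger)=\overline{\det X}$, and the three cases for $\dim(W\cap W^{\perp_b})$) and they hold; the case analysis is exhaustive since that dimension can only be $0$, $1$, or $2$. What your approach buys is rigor and generality: it makes airtight the paper's unproved assertion that $\ket{00}$ and $\ket{ab}$ are the \emph{only} product states in the generic case (at most two projective zeros of a nonzero binary form), and the polarity statement holds verbatim for any nondegenerate quadratic form on a four-dimensional complex space. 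What the paper's approach buys is the explicit normal forms for the product states of $\cV^\perp$, which are exactly what the subsequent sections consume to build the test operators; your argument establishes the count without producing the states, so for the purposes of the verification strategies one would still need to solve $\det(\alpha+\lambda\beta)=0$ explicitly as the paper does.
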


\begin{proof}
Here, we disregard global phase factors. 
Let $N$ be the number of distinct product states in $\cV$. 
Since the maximum dimension of a CES in two-qubit Hilbert space is $1$~\cite{demianowicz2020approach, parthasarathy2004maximal} and $\cV$ is two-dimensional, it holds that $N \geq 1$.

First, suppose that there are at least two distinct product states in $\cV$, i.e., $N \geq 2$. 
Without loss of generality, using the Schmidt decomposition, 
we can write these two states as $\ket{00}$ and $\ket{ab}$, where $\ket{a}, \ket{b}$ are single-qubit states. 
In this case, there must be two product states in $\cV^\bot$, given by $\ket{1\bar{b}}$ and $\ket{\bar{a}1}$, 
where $\ketbra{\bar{a}}+\ketbra{a} = \1$ and similarly for $\ket{\bar{b}}$. 
Then, consider the following cases:
\begin{enumerate}
\item If $\ket{a}=\ket{0}$ or $\ket{b} = 0$, there are \emph{infinitely} many product states in $\cV$, e.g., $\cV = {\rm span}\{\ket{00}, \ket{01}\}$ or $\cV = {\rm span}\{\ket{00}, \ket{10}\}$. 
In both cases, the orthogonal complement $\cV^\bot$ also contains \emph{infinitely} many product states,
with $\{\ket{10}, \ket{11}\}$ or $\{\ket{01}, \ket{11}\}$ being orthogonal product bases in this subspace, respectively.
\item Otherwise, $\ket{00}$ and $\ket{ab}$
($\ket{1\bar{b}}$ and $\ket{\bar{a}1}$) are the only product states in $\cV$ ($\cV^\bot$). 
\end{enumerate}
The above constructive proof reveals that, if $\cV$ has at least two distinct product states,
then the number of distinct product states in $\mathcal{V}$ equals that in $\mathcal{V}^\perp$, 
and is either $2$ or $+\infty$.

Then, we consider the case $N=1$, where there is only one product state in $\cV$.
Again, since the maximum dimension of a CES two-qubit Hilbert space is $1$ and $\cV^\perp$ is two-dimensional,
it holds that $\cV^\perp$ must have at least one product state.
We prove by contradiction that $\cV^\perp$ has exactly one product state.
Assume that $\cV^\bot$ contains at least two distinct product states. 
It follows from the proof for the case $N\geq2$ that, 
there are $2$ or $+\infty$ distinct product states in $\cV$, 
thus contradicting the assumption that $N=1$. 
We are done.
\end{proof}

We now show that whether $\cV$ is verifiable is determined by the number of distinct product states in $\cV$. 
Based on Lemma~\ref{lemma:number of product states}, 
we classify all two-dimensional subspaces of a two-qubit Hilbert space into three different types:
verifiable, perfectly verifiable, and unverifiable subspaces.

\paragraph*{\textbf{Verifiable and perfectly verifiable subspaces.}} 
If $\cV$ only contains two distinct product states, 
then, by Lemma~\ref{lemma:number of product states}, we can span $\cV^\bot$ using two product states. 
This implies that we can verify the subspace with two test operators:
\begin{align}
    M_i = \1 - \ketbra{\tau_i}, \quad i = 0,1, 
\end{align}
where $\ket{\tau_i}$ are the product states in $\cV^\bot$. 
We call such a subspace $\cV$ a \emph{verifiable subspace}.
Specifically, if these are two orthogonal product states, i.e., $\braket{\tau_0}{\tau_1} = 0$, we can verify the subspace with only one test operator of the form: 
\begin{align}
    M = \1 - (\ketbra{\tau_0} + \ketbra{\tau_1}). 
\end{align}
In this case, we refer to $\cV$ as a \emph{perfectly verifiable subspace}.
Obviously, a subspace with infinitely many product states is also a perfectly verifiable subspace, 
as it is evident from the constructive proof of Lemma~\ref{lemma:number of product states}
that we can always find two orthogonal product states in the corresponding complement subspace.

For example, the subspace spanned by $\ket{00}$ and $(\ket{0} + \ket{1})^{\ox 2}/2$ is a verifiable subspace, since its complement subspace contains two non-orthogonal product states $\ket{1}\ox(\ket{0} - \ket{1})/\sqrt{2}$ and $(\ket{0} - \ket{1})\ox\ket{1}/\sqrt{2}$. 
The subspace spanned by $(\ket{00} + \ket{11})/\sqrt{2}$ and $(\ket{00} - \ket{11})/\sqrt{2}$ is a perfectly verifiable subspace, since its complement subspace contains two orthogonal product states $\ket{01}$ and $\ket{10}$. 

\paragraph*{\textbf{Unverifiable subspace.}} 
On the other hand, if there is only one product state in $\cV$, then, by Lemma~\ref{lemma:number of product states}, 
we cannot span $\cV^\bot$ with product states . 
This type of subspace is called an \emph{unverifiable subspace}. 
In this case, we can only reject this product state in the test, and the corresponding test operator is:
\begin{align}
    M = \1 - \ketbra{\tau}, 
\end{align}
where $\ket{\tau}$ is the only product state in $\cV^\bot$. 
Mathematically, the corresponding verification operator is:
\begin{align}
    \Omega_{u} = M = \1 - \ketbra{\tau}. 
    \label{eq:verification operator of unverifiable subspace}
\end{align}
We have $\nu(\Omega_u) = 0$, which means this strategy is inevitably fooled by a state $\vert\tau'\rangle$, 
where $\vert\tau'\rangle$ is an entangled state in the complement subspace and 
$\langle\tau\vert\tau'\rangle = 0$. 
Therefore, there is no verification strategy based on LOCC and projective measurements for an unverifiable subspace. 
For example, the subspace spanned by $(\ket{00} + \ket{11})/\sqrt{2}$ and $\ket{01}$ is an unverifiable subspace, 
since its complement subspace only contains one orthogonal product state $\ket{10}$. 
The verification operator defined in Eq.~\eqref{eq:verification operator of unverifiable subspace} will be fooled by $(\ket{00} - \ket{11})/\sqrt{2}$. 

\subsection{Verification strategy}

With the above classification, we design a verification strategy tailored to (perfectly) verifiable subspaces and analyze their corresponding spectral gaps. 

\paragraph*{\textbf{Perfectly verifiable subspace}.}
For a perfectly verifiable subspace, we construct a two-outcomes POVM $\{\ketbra{\tau_0} + \ketbra{\tau_1}, \1 - \ketbra{\tau_0} - \ketbra{\tau_1}\}$, 
where $\ket{\tau_i}~(i=0,1)$ are two \emph{orthogonal} product states in the target subspace. 
We pass the state with the result corresponding to $\ketbra{\tau_0} + \ketbra{\tau_1}$. 
Mathematically, the corresponding verification operator is given by 
\begin{align}
    \Omega_{p} = \ketbra{\tau_0} + \ketbra{\tau_1}. 
    \label{eq:verification operator of perfectly verifiable subspace}
\end{align}
Obviously, we have $\nu(\Omega_p) = 0$, which means no states from the complement subspace can pass this strategy. 
Therefore, to achieve a confidence level of $1-\delta$, it suffices to choose 
\begin{align}
    N(\Omega_p) = \left\lceil\frac{1}{\epsilon}\ln\frac{1}{\delta}\right\rceil. 
\end{align}
We can also refer to this kind of subspace as a \emph{local subspace}.

\paragraph*{\textbf{Verifiable subspace}.}
For a verifiable subspace, the strategy is slightly more complex than for other types.
It involves two POVMs: $\{\1-\ketbra{\tau_2}, \ketbra{\tau_2}\}$ and $\{\1-\ketbra{\tau_3}, \ketbra{\tau_3}\}$, 
where $\ket{\tau_i}~(i=2,3)$ are product states in the complement subspace. 
Each POVM is performed with probability $1/2$ and we pass the states with the result corresponding to the $\1 - \ketbra{\tau_i}~(i=2,3)$. 
Mathematically, the corresponding verification operator is given by 
\begin{align}
    \Omega_{v} = \1 - \frac{1}{2}(\ketbra{\tau_2}+\ketbra{\tau_3}).
    \label{eq:verification operator of verifiable subspace}
\end{align}
Although states in the complement subspace can pass each test individually, they cannot pass all tests with certainty. 
The spectral gap of this verification operator is given as follows. 

\begin{lemma}
\label{lem:spectral gap of verifiable subspace}
For a verifiable subspace, the spectral gap of the verification operator defined in Eq.~\eqref{eq:verification operator of verifiable subspace} is 
\begin{align}
\nu(\Omega_v) = \frac{1}{2}(1-|\braket{\tau_3}{\tau_2}|). 
\end{align}
where $\ket{\tau_i}~(i=2,3)$ are the product states in the complement subspace. 
\end{lemma}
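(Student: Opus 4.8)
The plan is to compute $\lambda_{\max}(\wh{\Omega}_v)$ directly, where $\wh{\Omega}_v = (\1-\Pi)\Omega_v(\1-\Pi)$, and then invoke the identity $\nu(\Omega_v) = 1 - \lambda_{\max}(\wh{\Omega}_v)$ from Section~\ref{sec:subspace verification}.

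First I would pin down the structure of $\1-\Pi$. Since $\cV$ is a verifiable subspace, Lemma~\ref{lemma:number of product states} guarantees that $\cV^\perp$ contains exactly the two product states $\ket{\tau_2}$ and $\ket{\tau_3}$; being distinct and non-orthogonal, they are linearly independent, and since $\dim\cV^\perp = 2$ they span it, i.e.\ $\cV^\perp = \spn{\ket{\tau_2},\ket{\tau_3}}$. Hence $\1-\Pi$ is precisely the rank-two projector onto $\cV^\perp$, and in particular $(\1-\Pi)\ket{\tau_i} = \ket{\tau_i}$ for $i=2,3$. Substituting $\Omega_v$ from Eq.~\eqref{eq:verification operator of verifiable subspace} and using this, the conjugation collapses to
\begin{align}
\wh{\Omega}_v = (\1-\Pi) - \frac{1}{2}\big(\ketbra{\tau_2} + \ketbra{\tau_3}\big).
\end{align}
This operator annihilates $\cV$, so all of its nonzero spectral weight sits on the two-dimensional space $\cV^\perp$, on which $\1-\Pi$ acts as the identity.

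Next I would diagonalize $\ketbra{\tau_2} + \ketbra{\tau_3}$ restricted to $\cV^\perp$. Writing $c := |\braket{\tau_2}{\tau_3}| \in [0,1)$, this operator has trace $2$ and determinant $1-c^2$ (a one-line computation in a convenient basis of the span), so its eigenvalues are $1+c$ and $1-c$. Consequently $\wh{\Omega}_v\big|_{\cV^\perp}$ has eigenvalues $\tfrac{1-c}{2}$ and $\tfrac{1+c}{2}$, and the full spectrum of $\wh{\Omega}_v$ on the two-qubit space is $\{0,0,\tfrac{1-c}{2},\tfrac{1+c}{2}\}$. Since $c\ge 0$, the largest eigenvalue is $\tfrac{1+c}{2}$, giving
\begin{align}
\nu(\Omega_v) = 1 - \lambda_{\max}(\wh{\Omega}_v) = 1 - \frac{1+c}{2} = \frac{1}{2}\big(1 - |\braket{\tau_3}{\tau_2}|\big),
\end{align}
as claimed.

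I do not expect a genuine obstacle here: the eigenvalue computation is elementary two-dimensional linear algebra, and the only point needing a little care is the justification that $\cV^\perp = \spn{\ket{\tau_2},\ket{\tau_3}}$ exactly — this is what makes both the cancellation of the cross terms in $\wh{\Omega}_v$ and the claim ``$\1-\Pi$ acts as the identity on $\cV^\perp$'' legitimate, and it is supplied by the classification obtained through Lemma~\ref{lemma:number of product states}.
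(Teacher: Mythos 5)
Your proof is correct and follows essentially the same route as the paper: both arguments reduce to finding the two eigenvalues $\tfrac{1}{2}(1\pm|\braket{\tau_3}{\tau_2}|)$ of $\Omega_v$ restricted to $\cV^\perp$ and subtracting the larger one from $1$. The only cosmetic difference is that the paper exhibits the explicit eigenvectors $\ket{\tau_2}\pm\tfrac{a}{|a|}\ket{\tau_3}$ whereas you extract the eigenvalues from the trace and determinant of $\ketbra{\tau_2}+\ketbra{\tau_3}$; your added justification that $\cV^\perp=\spn{\ket{\tau_2},\ket{\tau_3}}$ (via Lemma~\ref{lemma:number of product states}) is a point the paper leaves implicit.
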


\begin{proof}
The two (unnormalized) eigenstates of $\Omega_v$ in $\cV^\bot$ are 
\begin{align}
    \ket{\tau_2} + \frac{a}{|a|}\ket{\tau_3}, \quad 
    \ket{\tau_2} - \frac{a}{|a|}\ket{\tau_3}, 
\end{align}
where $a = \braket{\tau_3}{\tau_2}$.
The corresponding eigenvalues are $(1-|a|)/2$ and $(1+|a|)/2$, respectively. 
Using the definition of the spectral gap, we have 
\begin{align}
    \nu(\Omega_v) = 1 - \frac{1}{2}(1 + |a|) = \frac{1}{2}(1 - |a|). 
\end{align}
\end{proof}
Therefore, with the result of Lemma~\ref{lem:spectral gap of verifiable subspace}, it suffices to choose 
\begin{align}
    N(\Omega_v) 
    = \left\lceil\frac{2}{1-|\braket{\tau_2}{\tau_3}|}\times \frac{1}{\epsilon}\ln\frac{1}{\delta}\right\rceil, 
\end{align}
to achieve a confidence level of $1-\delta$.

\section{GHZ-W subspace verification} 
\label{eq:ghz-w subspace verification}

In this section, building on the results of two-qubit subspace verification, we propose two efficient strategies to verify 
the subspace $\cV_3:={\rm span}\{\ket{{\rm GHZ}},\ket{{\rm W}}\}$ spanned by the three-qubit GHZ and W states, where
\begin{subequations}
\label{eq:W and GHZ}
\begin{align}
    \ket{{\rm GHZ}} &:= \frac{1}{\sqrt{2}}(\ket{000} + \ket{111}), \\
    \ket{{\rm W}} &:= \frac{1}{\sqrt{3}}(\ket{001} + \ket{010} + \ket{100}). 
\end{align}
\end{subequations}
We call $\cV_3$ the \emph{three-qubit GHZ-W subspace}, which is genuinely multipartite entangled~\cite{makuta2021selftesting}.
Notably, most verification protocols for bipartite pure states, Dicke states, and W states known so far are based on adaptive measurements~\cite{liu2019efficient, yu2019optimal, li2021verification, liu2023efficient}.
Therefore, we begin by constructing multiple test operators based on one-way adaptive measurements. 
Subsequently, we propose two efficient verification strategies and conduct a detailed analysis of their sample complexities.

\subsection{One-way adaptive test operators} 
\label{sec:one-way adaptive test operators}

One-way adaptive measurements are widely used in the certification of quantum information and were first proposed in Section 6 of~\cite{hayashi2009group}, owing to their experimental feasibility and practicality.
This approach was further developed in subsequent works~\cite{hayashi2015verifiable, fujii2017verifiable, hayashi2018secure}.
We present a general subroutine to construct one-way adaptive measurements suitable for verifying $\cV_3$.
Specifically, we randomly measure a qubit in the Pauli basis $P\in\{X,Z\}$. 
Each measurement yields one of two possible outcomes, $+1$ and $-1$, 
corresponding to the positive and negative eigenspaces of $P$, respectively.
Depending on the measurement outcome, the remaining two qubits are projected into a two-qubit subspace spanned by two post-measurement states, 
called the \emph{post-measurement subspace}.
Table~\ref{table:post-measurement-states of W and GHZ} summarizes all possible post-measurement states for different measurement operators and outcomes.
Subsequently, based on the outcome of the first measurement, we apply the two-qubit subspace test operator. 
Therefore, the corresponding one-way adaptive test operators induced by $P$ are given by
\begin{align}
    M_{P} = P^+\ox M_{P}^+ + P^-\ox M_{P}^-. \label{eq: test operator XZ}
\end{align}
That is, if the outcome of $P$ is $+1$, we perform the two-qubit measurement associated with $M_{P}^+$.
Otherwise, we perform the measurement corresponding to $M_{P}^-$. 
Finally, the states that produce results consistent with $M_{P}$ pass the test.

\begin{table*}[!htbp]
\centering
\renewcommand{\arraystretch}{1.5} 
\setlength{\tabcolsep}{18pt}      
\setlength\heavyrulewidth{0.3ex}  
\begin{tabular}{@{}cccc@{}}
\toprule
\multicolumn{2}{c}{\bf First measurement} & \multicolumn{2}{c}{\bf Post-measurement states} \\ \midrule
Pauli & outcome & $\ket{{\rm GHZ}}$ &  $\ket{{\rm W}}$ \\\midrule[0.06ex]
\multirow{2}{*}{$Z$} & $+$ & $\ket{00}$ & $\frac{1}{\sqrt{2}}(\ket{01}+\ket{10})$ \\
 & $-$ & $\ket{11}$ & $\ket{00}$  \\ \midrule[0.06ex]
\multirow{2}{*}{$X$} & $+$ & $\frac{1}{\sqrt{2}}(\ket{00}+\ket{11})$ & $\frac{1}{\sqrt{3}}(\ket{01}+\ket{10}+\ket{00})$ \\
& $-$ &$\frac{1}{\sqrt{2}}(\ket{00}-\ket{11})$ & $\frac{1}{\sqrt{3}}(\ket{01}+\ket{10}-\ket{00})$ \\ 
\bottomrule
\end{tabular}
\caption{Post-measurement states for the three-qubit GHZ-W subspace 
spanned by $\{\ket{{\rm GHZ}}, \ket{{\rm W}}\}$.} 
\label{table:post-measurement-states of W and GHZ}
\end{table*}

Now, let us consider two concrete cases where $P$ is chosen to be the Pauli $Z$ or the $X$ measurements.
For the post-measurement subspace induced by the Pauli $Z$ measurement, 
if the outcome is ``$+$'' (``$-$''), the resulting two-qubit subspace is unverifiable (perfectly verifiable).
The test operators are given by  
\begin{align}
    M_{Z}^+ = \1 - \proj{11}, \qquad
    M_{Z}^- = \proj{00} + \proj{11}. 
\end{align}
The resulting one-way adaptive test operator induced by the $Z$ measurement thus has the form
\begin{align}
    M_{Z} = Z^+\ox M_{Z}^+ + Z^-\ox M_{Z}^- = \proj{0}\ox \left(\1 - \proj{11}\right) + \proj{1}\ox\left(\proj{00} + \proj{11}\right).
\end{align}
Actually, this one-way adaptive test operator can be implemented non-adaptively by performing the $Z$ measurements on each qubit.
The state is rejected if the measurement outcome contains exactly two ``$-$'' results.
Likewise, for the post-measurement subspace induced by the $X$ measurement, 
if the outcome is ``$+$'' (``$-$''), the resulting two-qubit subspace is perfectly verifiable (unverifiable). 
The test operators are given by
\begin{align}
    M_{X}^+ = \proj{x_+ x_+} + \proj{\bar{x}_+ \bar{x}_+}, \qquad
    M_{X}^- = \1 - \ketbra{x_- x_-'},
\end{align}
where the states are defined as
\begin{equation}
\begin{split}
    \ket{x_+} &= \cos\alpha\ket{0} + \sin\alpha\ket{1}, \\
    \ket{\bar{x}_+} &= \sin\alpha\ket{0} - \cos\alpha\ket{1}, \\
    \ket{x_-} &= \frac{\ket{0} + e^{i\frac{\pi}{3}}\ket{1}}{\sqrt{2}}, \\
    \ket{x_-'} &= \frac{\ket{0} + e^{-i\frac{\pi}{3}}\ket{1}}{\sqrt{2}}, 
\end{split}
\end{equation}
with $\alpha = \arctan (\sqrt{5}-1)/2$. The resulting one-way adaptive test operator induced by the $X$ measurement thus has the form 
\begin{align}
    M_{X} = \proj{+}\ox\left(\proj{x_+ x_+} + \proj{\bar{x}_+ \bar{x}_+}\right) + \proj{-}\ox\left(\1 - \ketbra{x_- x_-'}\right), 
\end{align}
where $\ket{+} = (\ket{0} + \ket{1}) / \sqrt{2}$ and $\ket{-} = (\ket{0} - \ket{1})/\sqrt{2}$ are two eigenstates of $X$, respectively.

To construct additional test operators beyond $M_X$ and $M_Z$, a general framework is necessary. 
An important observation from quantum state verification is that the local symmetry of the target subspace can be exploited to create more
test operators from current test operators~\cite{pallister2018optimala}. 
Specifically, if a product unitary $U$ satisfies $U\Pi_3 U^\dagger = \Pi_3$, where $\Pi_3$ is the projector of $\cV_3$, then $U$ is a local symmetry operator of $\cV_3$.
This symmetry also enables an analytical determination of the spectral gap, possibly optimizing performance.
Motivated by this observation, we identify the following two local symmetries of $\cV_3$:
\begin{enumerate}
\item \emph{Qubit permutations:}
\begin{align}
    V_\sigma 
    := \sum_{i_1,i_2,i_3} \ket{i_{\sigma^{-1}(1)}i_{\sigma^{-1}(2)}i_{\sigma^{-1}(3)}}\!\bra{i_1 i_2 i_3}, 
\end{align}
where $\sigma$ ranges over all elements of the symmetric group $\bS_3$; and 
\item \emph{Local unitaries}:
\begin{align}
    U_1 &:= R_{2\pi/3}\ox R_{2\pi/3}\ox R_{2\pi/3}, \\
    U_2 &:= R_{4\pi/3}\ox R_{4\pi/3}\ox R_{4\pi/3},
\end{align}
where $R_\phi := \ketbra{0} + e^{i\phi}\ketbra{1}$ and $U_2 = U_1^2$.
\end{enumerate}
One can check that $V_\sigma \Pi_3 V_\sigma^\dagger = \Pi_3$ and $U_i \Pi_3 U_i^\dagger = \Pi_3$ for $\sigma\in\bS_3$ and $i=1,2$. 
Using these two local symmetries of $\cV_3$, we can construct additional test operators.

Notice that $M_Z$ is invariant under the above local symmetries, 
so we focus on constructing additional test operators from $M_X$. 
First, we consider the qubit permutation symmetry. 
We define $M_{X, i}$ ($i=1,2,3$) as a set of new test operators, where an $X$ measurement is performed on the $i$-th qubit,
followed by a two-qubit verification based on the measurement result.
This construction takes advantage of the qubit permutation symmetry $V_\sigma$. 
Therefore, with this property, we can construct $3$ additional test operators.
Then, we use the local unitary symmetry.
We observe that $U_j^\dagger M_{X,i} U_j$ ($j=1,2$) are also valid one-way adaptive test operators, since the subspace $\cV_3$ is invariant under the local unitaries $U_j$. 
Physically, $U_j^\dagger M_{X,i} U_j$ corresponds to first applying the local rotation operator $U_j$ to the quantum state, 
followed by the test operator $M_{X,i}$. 
Consequently, we can construct a total of six additional test operators, given by $2\times 3 = 6$. 

To summarize, we build $10$ test operators for the GHZ-W subspace by applying local symmetries. 
We then present two verification strategies using these one-way adaptive test operators and assess their effectiveness.

\subsection{XZ strategy}

Here, we propose a verification strategy using the $4$ test operators constructed above, termed the \emph{XZ strategy}. 

\paragraph*{The strategy.}
In each round, we select a measurement $P\in\{X, Z\}$ according to a probability distribution $\mu(P)$, which will be optimized later. 
If the $Z$ measurement is chosen, we perform the test operator $M_Z$. 
Otherwise, we choose a qubit $i\in\{1,2,3\}$ uniformly at random to perform the test project $M_{X,i}$. 
Mathematically, the verification operator can be written as
\begin{align}
    \Omega_{\rm XZ} = \mu(Z) M_Z + \frac{1}{3}\mu(X)\sum_{i=1}^{3} M_{X,i}. 
\end{align}

\paragraph*{Performance analysis.}
It is challenging to analytically determine the optimal probability $\mu$ that maximizes the spectral gap of $\Omega_{\rm XZ}$. 
To address this, we numerically analyze the performance of the verification operator $\Omega_{\rm XZ}$.
We sample $\mu(Z)$ from $0$ to $1$ with a step size of $0.001$ and
compute the spectral gaps. The numerical results are presented in Fig.~\ref{fig:numerical results}(a)
and show that $\max \nu(\Omega_{\rm XZ}) \approx 0.262$ when $\mu(Z)\approx 0.424$. 
Therefore, to achieve a confidence level of $1-\delta$, the required number of state copies is given by 
\begin{align}\label{eq:XZ-strategy}
    N(\Omega_{\rm XZ}) 
    \approx \left\lceil3.817\times\frac{1}{\epsilon}\ln\frac{1}{\delta}\right\rceil.
\end{align}

\emph{Remark.}
The design of our XZ strategy is directly inspired by the seminal work of Hayashi and Morimae on quantum graph states verification~\cite{hayashi2015verifiable}. 
In their foundational protocol for verifiable measurement-only blind quantum computing, the client verifies graph states with $X$ and $Z$ measurements, thereby ensuring the correctness of the computation result. 
This pioneering approach was later adapted to verify the output of fault-tolerant quantum computation in the measurement-based model~\cite{fujii2017verifiable}. 
Building on these ideas, we use adaptive measurements based on $X$ and $Z$ measurements to verify the GHZ-W subspace.

\begin{figure}[!htbp]
    \centering
    \includegraphics[width=1\linewidth]{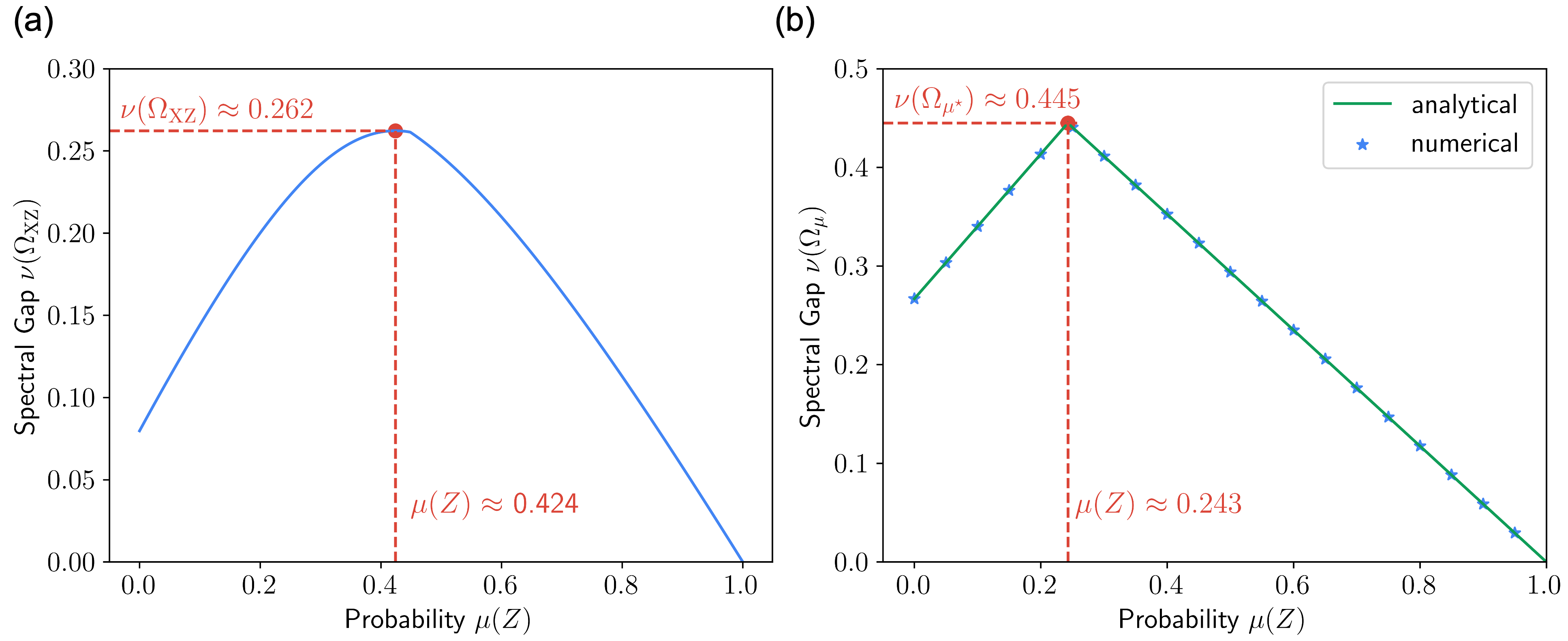}
    \caption{\raggedright 
    \textbf{(a)} Numerical result of the XZ strategy. 
    The probability $\mu(Z)$ is sampled from $0$ to $1$ with a step size of $0.001$, 
    and the spectral gaps are computed. 
    The maximal spectral gap is achieved when $\mu(Z)\approx 0.424$.
    \textbf{(b)} Analytical and numerical results of the rotation strategy. 
    The two green lines represent the functions $ \frac{47}{80}\mu(X)$ and $1 - \frac{11}{15}\mu(X)$, respectively. 
    The blue stars indicate the numerical results of the rotation strategy, where $\mu(X)$ is sampled from $0$ to $1$ with a step size of $0.05$. 
    The maximal spectral gap is achieved when $\mu(Z) = \frac{77}{317}\approx 0.243$.}
    \label{fig:numerical results}
\end{figure}

\subsection{Rotation strategy} 

The XZ strategy, while effective, lacks an analytical solution and exhibits a sample complexity approximately four times that of the globally optimal strategy. To address these limitations, we introduce the rotation strategy, utilizing the $10$ test operators constructed in Section~\ref{sec:one-way adaptive test operators}. This strategy, for which we derive an analytical performance, achieves a sample complexity approximately twice that of the globally optimal strategy.

\paragraph*{The strategy.}
In each round, we select a measurement $P\in\{X, Z\}$ according to a probability distribution $\mu(P)$, which will be optimized later. 
If the $Z$ measurement is chosen, we perform the test operator $M_Z$. 
Otherwise, we apply $U_1$, $U_2$ or $\emptyset$ (no unitary at all) uniformly at random, 
followed by choosing one qubit uniformly at random to carry out the test project $M_{X,i}$. 
Mathematically, this test operator can be written as
\begin{align}
    \widehat{M}_X := \frac{1}{3} \sum_{i=1}^3 M_{X,i}', 
\end{align}
where 
\begin{align}
    M'_{X,i} := \frac{1}{3}(M_{X,i} + U_1^\dagger M_{X,i} U_1 + U_2^\dagger M_{X,i} U_2). 
\end{align}
Therefore, the verification operator for this strategy is given by
\begin{align}\label{eq:verification-strategy-GHZ-W}
    \Omega_\mu := \mu(Z) M_Z + \mu(X) \widehat{M}_X, 
\end{align}
where $\mu$ is a probability distribution satisfying $\sum_P\mu(P)=1$. 
The whole procedure is illustrated in Fig.~\ref{fig:GHZ-W verification}.

\begin{figure}[!htbp]
    \centering
    \includegraphics[width=0.6\linewidth]{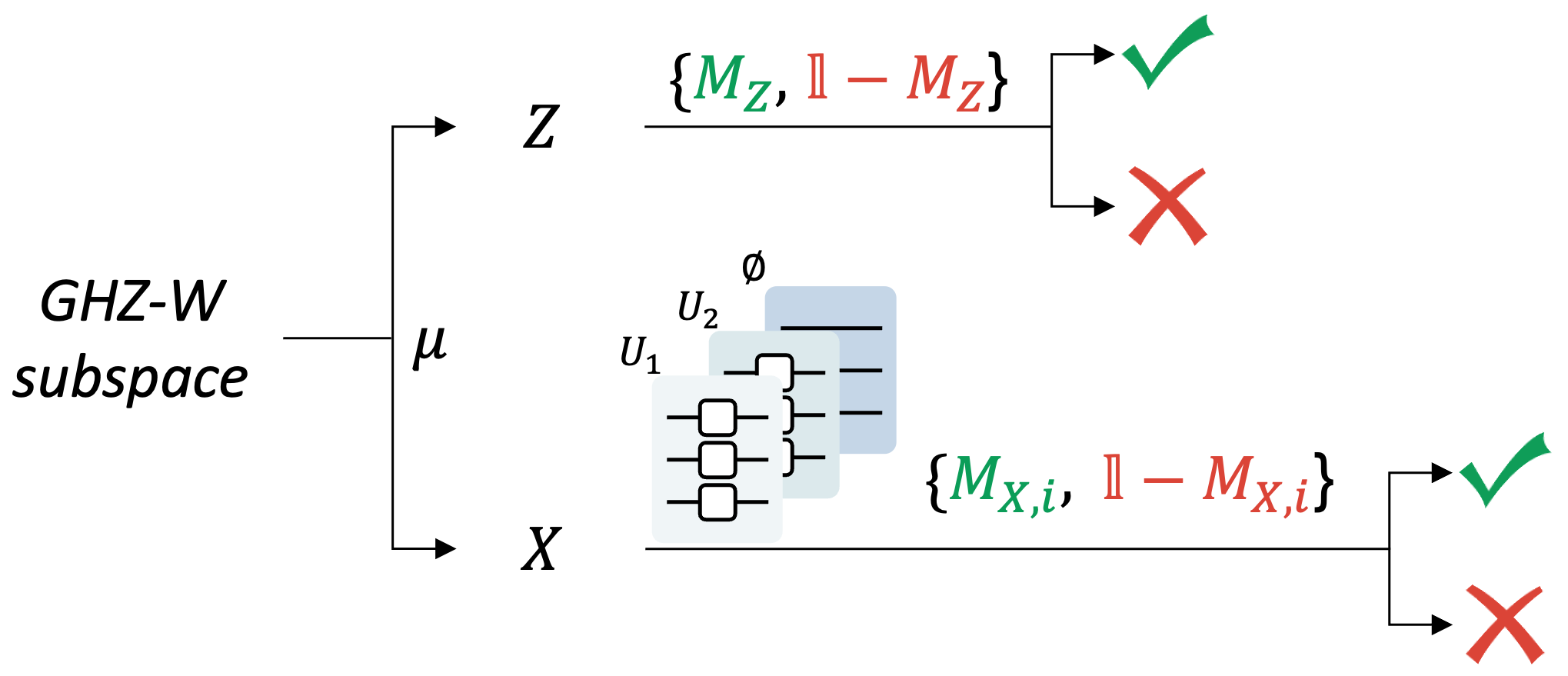}
    \caption{\raggedright 
    The one-way adaptive verification strategy for GHZ-W subspace. 
    We first randomly select $P\in\{X, Z\}$ according to a predefined probability distribution $\mu$. 
    (1) If $P=Z$, we perform the test operator $M_Z$. 
    (2) If $P=X$, we apply a unitary gate randomly chosen from the set $\{U_1, U_2, \emptyset\}$, where $\emptyset$ denotes applying no gate.
    Then, we perform the test operator $M_{X,i}$ starting with an $X$ measurement on the $i$-th qubit. 
    Based on its measurement result, we proceed with a two-qubit verification strategy on the remaining qubits.}
    \label{fig:GHZ-W verification}
\end{figure}

\paragraph*{Performance analysis.}
Obviously, the choice of $\mu(P)$ influences the performance of the strategy. 
Fortunately, the optimal probability distribution can be determined analytically, as shown in the following proposition. 

\begin{proposition}
\label{lem: spectral gap of rotation strategy}
The strategy operator defined in Eq.~\eqref{eq:verification-strategy-GHZ-W}, achieves the largest spectral gap of $141/317\approx 0.445$ when $\mu^\star(X) = 240/317 \approx 0.757$.
\end{proposition}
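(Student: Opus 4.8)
The plan is to compute the spectral gap $\nu(\Omega_\mu) = 1 - \lambda_{\max}(\widehat\Omega_\mu)$ explicitly as a function of $\mu(X)$ (writing $\mu(Z) = 1-\mu(X)$) by restricting the projected operator $\widehat\Omega_\mu = (\1-\Pi_3)\Omega_\mu(\1-\Pi_3)$ to the six-dimensional complement $\cV_3^\perp$, and then optimize over $\mu(X)$. First I would write $\Omega_\mu = \mu(Z) M_Z + \mu(X)\widehat M_X$ and record that, since all test operators satisfy $M\,\Pi_3 = \Pi_3$, the relevant object is the compression to $\cV_3^\perp$. The key simplification is that $\widehat M_X$ was deliberately symmetrized over the permutation group $\bS_3$ (via the sum over $i$) and over the cyclic group $\{\emptyset,U_1,U_2\}$; together these generate a group $G$ under which $\cV_3$, hence $\cV_3^\perp$, is invariant. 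By Schur's lemma the compressed operator $\widehat\Omega_\mu|_{\cV_3^\perp}$ is block-diagonal with respect to the isotypic decomposition of $\cV_3^\perp$ under $G$, so $\lambda_{\max}$ is the maximum over a small number of small blocks whose entries are simple trigonometric expressions coming from the angle $\alpha = \arctan((\sqrt5-1)/2)$ and the phases $e^{\pm i\pi/3}$, $e^{\pm 2\pi i/3}$.

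The concrete steps: (i) fix an explicit orthonormal basis of $\cV_3^\perp$ — the three-qubit Hilbert space is $8$-dimensional, $\cV_3$ is $2$-dimensional, and a convenient basis for the complement can be built from the symmetric subspace vectors orthogonal to $\ket{\rm GHZ},\ket{\rm W}$ together with the three antisymmetric-type vectors that are annihilated differently by permutations; (ii) compute the matrix of $M_Z$ restricted to $\cV_3^\perp$ — this is easy because $M_Z = \1 - \sum_{|x|=2}\proj{x}$ is diagonal in the computational basis and kills precisely the three weight-two strings, so $\widehat{M_Z}$ on $\cV_3^\perp$ is a rank-deficient projector-like object; (iii) compute the matrix of $\widehat M_X$ on $\cV_3^\perp$, using that $\widehat M_X$ is $G$-invariant so only its eigenvalues on each $G$-isotypic block are needed, and these reduce to $2\times2$ (or scalar) computations involving $\cos^2\alpha,\sin^2\alpha,\cos\alpha\sin\alpha$ and $\cos(\pi/3)$; (iv) assemble $\widehat\Omega_\mu = (1-\mu(X))\widehat{M_Z} + \mu(X)\widehat M_X$ on each block, diagonalize the resulting (at most $2\times 2$) matrices to get $\lambda_{\max}$ as the upper envelope of a few explicit functions of $\mu(X)$ that are linear or square-root-of-quadratic; (v) the green lines $\tfrac{47}{80}\mu(X)$ and $1-\tfrac{11}{15}\mu(X)$ in Fig.~\ref{fig:numerical results}(b) strongly suggest that near the optimum two such branches are in fact \emph{affine} in $\mu(X)$ and the optimum is their crossing point; solving $\nu = \tfrac{47}{80}\mu(X)$ against the competing branch (equivalently $\lambda_{\max}$ from the other block) gives $\mu^\star(X) = 240/317$ and $\nu = 141/317$, which I would then verify is a genuine maximum by checking that every other eigenvalue branch stays below $\min\{\tfrac{47}{80}\mu(X),\,1-\tfrac{11}{15}\mu(X)\}$ on $[0,1]$.

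I would present the argument as: a lemma identifying the $G$-decomposition of $\cV_3^\perp$ and the block structure of $\widehat\Omega_\mu$; a short computation table giving the eigenvalue of each block as a function of $t:=\mu(X)$; and a final one-line optimization. Concretely one expects $\cV_3^\perp$ to split under $\bS_3\times\bZ_3$ into a one-dimensional piece and copies of a standard representation, so that $\lambda_{\max}(\widehat\Omega_\mu)$ equals the larger root of a $2\times2$ block whose trace and determinant are quadratics in $t$; expanding $\lambda_{\max} = \tfrac12(\mathrm{tr} + \sqrt{\mathrm{tr}^2 - 4\det})$ and setting $\tfrac{d}{dt}\nu = 0$ yields the rational optimum, with the denominator $317$ arising naturally from the discriminant.

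The main obstacle I anticipate is step (iii)–(iv): correctly bookkeeping the action of the non-diagonal operators $M_{X,i}$, $U_j^\dagger M_{X,i} U_j$ on the chosen basis of $\cV_3^\perp$, because $M_{X,i}$ is itself adaptive — $M_{X,i} = \proj{+}\ox(\cdots) + \proj{-}\ox(\cdots)$ on qubit $i$ — so its matrix elements mix the $X$-eigenbasis of one qubit with the computational basis of the other two, and the post-measurement states for the ``$-$'' branch carry the awkward phases $e^{\pm i\pi/3}$ with the non-orthogonal pair $\ket{x_-},\ket{x_-'}$. Keeping the $\alpha$-dependence symbolic until the very end, and exploiting $G$-invariance to avoid ever writing the full $6\times6$ matrix, is what makes this tractable; I would lean heavily on Schur's lemma so that only the scalar $\bra{v}\widehat M_X\ket{v}$ on a single representative vector $v$ of each isotype needs to be evaluated. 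Once those few numbers are in hand, the optimization is routine and the clean answer $141/317$ at $\mu^\star(X)=240/317$ falls out.
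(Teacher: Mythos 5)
Your plan is essentially the paper's proof: the paper likewise averages over the qubit-permutation and $U_1,U_2$ symmetries to put the verification operator into a symmetry-adapted block form, observes that every eigenvalue branch on $\cV_3^\perp$ is affine in $\mu(X)$, and locates the optimum at the crossing of $\tfrac{47}{80}\mu(X)$ with $1-\tfrac{11}{15}\mu(X)$, yielding $\mu^\star(X)=240/317$ and $\nu=141/317$. One small correction to your anticipated endgame: no $2\times 2$ block with a square-root discriminant ever appears, since Schur's lemma together with the constraint that $\ket{\rm GHZ}$ and $\ket{\rm W}$ are $+1$ eigenvectors forces every block on $\cV_3^\perp$ to be scalar, so the denominator $317$ arises simply as $141+176$ from adding the slopes $\tfrac{47}{80}=\tfrac{141}{240}$ and $\tfrac{11}{15}=\tfrac{176}{240}$.
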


\begin{proof}
The analysis of the spectral gap relies on the invariant properties of the subspace $\cV_3$. 
Suppose $M$ is a test operator for the subspace $\cV_3$.
Define 
\begin{align}
    M' &:= \frac{1}{3} \left(M + U_1^\dagger M U_ 1 + U_2^\dagger M U_2\right).  
\end{align}
With the fact that $\cV_3$ is invariant under $U_i$, each term $U_i M U_ i^\dagger$ is also a valid test operator. 
Then, incorporating qubit permutations, the averaged operator of $M$ can be defined as
\begin{align}
    \overline{M} 
    := \frac{1}{6} \sum_{\pi\in \bS_3} V_{\pi} M'  V_{\pi}^\dagger
    = \left[\begin{matrix}
        a & 0 & 0 & 0 & 0 & 0 & 0 & b \\
        0 & d & e & 0 & e & 0 & 0 & 0 \\
        0 & e & d & 0 & e & 0 & 0 & 0 \\
        0 & 0 & 0 & f & 0 & g & g & 0 \\
        0 & e & e & 0 & d & 0 & 0 & 0 \\
        0 & 0 & 0 & g & 0 & f & g & 0 \\
        0 & 0 & 0 & g & 0 & g & f & 0 \\
        b & 0 & 0 & 0 & 0 & 0 & 0 & c 
    \end{matrix}\right], 
\end{align}
where $a,b,c,d,e,f,g$ are coefficients. 
As $M$ is a test operator, the states $\ket{{\rm GHZ}}$ and $\ket{{\rm W}}$ are eigenstates of $\overline{M}$, i.e., 
\begin{align}
    \begin{cases}
        \overline{M} \ket{\rm GHZ} = \ket{\rm GHZ} \\
        \overline{M} \ket{\rm W} = \ket{\rm W}
    \end{cases}
    \quad \Rightarrow \quad 
    \begin{cases}
        d = 1-2e \\
        b = 1 - a \\
        c = a
    \end{cases}. 
\end{align}
Therefore, $\overline{M}$ reduces to the form 
\begin{align}
    \overline{M} 
    = \left[\begin{matrix}
        a & 0 & 0 & 0 & 0 & 0 & 0 & 1-a \\
        0 & 1-2e & e & 0 & e & 0 & 0 & 0 \\
        0 & e & 1-2e & 0 & e & 0 & 0 & 0 \\
        0 & 0 & 0 & f & 0 & g & g & 0 \\
        0 & e & e & 0 & 1-2e & 0 & 0 & 0 \\
        0 & 0 & 0 & g & 0 & f & g & 0 \\
        0 & 0 & 0 & g & 0 & g & f & 0 \\
        1-a & 0 & 0 & 0 & 0 & 0 & 0 & a 
    \end{matrix}\right]. 
\end{align}
In addition to $\ket{{\rm GHZ}}$ and $\ket{{\rm W}}$, the other eigenstates and eigenvalues are: 
\begin{align}
    \ket{v_1} &= \frac{1}{\sqrt{2}} (\ket{000} - \ket{111}), \quad 
    \lambda_1 = 2a - 1, \\
    \ket{v_2} &= \frac{1}{\sqrt{2}} (\ket{001} - \ket{010}), \quad 
    \lambda_2 = 1 - 3e, \\
    \ket{v_3} &= \frac{1}{\sqrt{2}} (\ket{001} - \ket{100}), \quad 
    \lambda_3 = 1 - 3e, \\
    \ket{v_4} &= \frac{1}{\sqrt{2}} (\ket{011} - \ket{101}), \quad 
    \lambda_4 = f - g, \\
    \ket{v_5} &= \frac{1}{\sqrt{2}} (\ket{011} - \ket{110}), \quad 
    \lambda_5 = f - g, \\
    \ket{v_6} &= \frac{1}{\sqrt{3}} (\ket{011} + \ket{101} + \ket{110}) \quad  
    \lambda_6 = f + 2g. 
\end{align}
Obviously, the spectral gap of $\overline{M}$ is given by 
\begin{align}
    \nu(\overline{M}) 
    &= 1 - \max\{2a-1, 1-3e, f-g, f+2g\}. 
\end{align}
Now consider the verification operator $\Omega_{\mu}$, defined in Eq.~\eqref{eq:verification-strategy-GHZ-W}. 
With the definitions of $M_Z$ and $\widehat{M}_X$, we have 
\begin{align}
    M_Z &= \overline{M}_Z, \\
    \widehat{M}_X &= \overline{M}_X. 
\end{align}
Therefore, the spectral gap of $\Omega_{\mu}$ is given by 
\begin{align}
    \nu(\Omega_\mu) 
    &= 1 - \max\left\{
        1 - \frac{13}{20}\mu(X),
        1 - \frac{47}{80}\mu(X),
        \frac{131}{240} \mu(X),
        \frac{11}{15}\mu(X)
        \right\} \\
    &= \min\left\{
    \frac{47}{80}\mu(X), 1 - \frac{11}{15}\mu(X)\right\}. 
\end{align}
Thus, when $\mu^\star(X) = 240/317 \approx 0.757$, 
the spectral gap reaches its maximum value:
\begin{align}
    \nu(\Omega_{\mu^\star}) 
    = \frac{141}{317}
    \approx 0.445.
\end{align}
\end{proof}
We also compare our analytical results in Lemma~\ref{lem: spectral gap of rotation strategy} with the numerical results, 
as shown in Fig.~\ref{fig:numerical results}(b). 
We sample $\mu(X)$ from $0$ to $1$ with a step size of $0.05$ and find that the results are consistent.
Therefore, to achieve a confidence level of $1-\delta$, the required number of state copies is given by
\begin{align}\label{eq:rotation-strategy}
    N(\Omega_{\mu^\star}) 
    = \left\lceil\frac{317}{141}\times \frac{1}{\epsilon}\ln\frac{1}{\delta}\right\rceil 
    \approx \left\lceil2.248\times\frac{1}{\epsilon}\ln\frac{1}{\delta}\right\rceil.
\end{align}

\emph{Remark.} In each round, we can also select a measurement $P\in\{X, Y, Z\}$ according to some probability distribution. 
However, the numerical results shows that the optimal spectral gap is achieved when the probability of performing $Y$ measurement is $0$. Thus, it is sufficient to consider the rotation strategy.

\subsection{Comparisons}

As mentioned previously, to achieve a confidence level of $1 - \delta$, the globally optimal verification strategy requires only $\epsilon^{-1}\ln\delta^{-1}$ state copies, but it involves entangled measurements. 
In the previous subsections, we proposed two verification strategies based on one-way adaptive measurements: the XZ strategy and the rotation strategy.
The XZ strategy requires a total of four test operators, while the rotation strategy requires ten test operators. 
However, the rotation strategy requires approximately $2.248\epsilon^{-1} \ln\delta^{-1}$ state copies, which is fewer than that of the XZ strategy. 
In Fig.~\ref{fig:comparison}, we compare the efficiency of these strategies. 
We set $\delta=0.001$ and adjust $\epsilon$ from $0.001$ to $0.1$. 
Each line represents the minimum number of state copies required to achieve a confidence level of $1-\delta$.
The globally optimal verification strategy is the most efficient, with the rotation strategy surpassing the XZ strategy in efficiency. However, executing the globally optimal strategy is experimentally challenging. Practically, the same confidence level is attainable with a few local measurement settings, 
requiring about double the state copies compared to the global approach.

\begin{figure}[!htbp]
    \centering
    \includegraphics[width=0.6\linewidth]{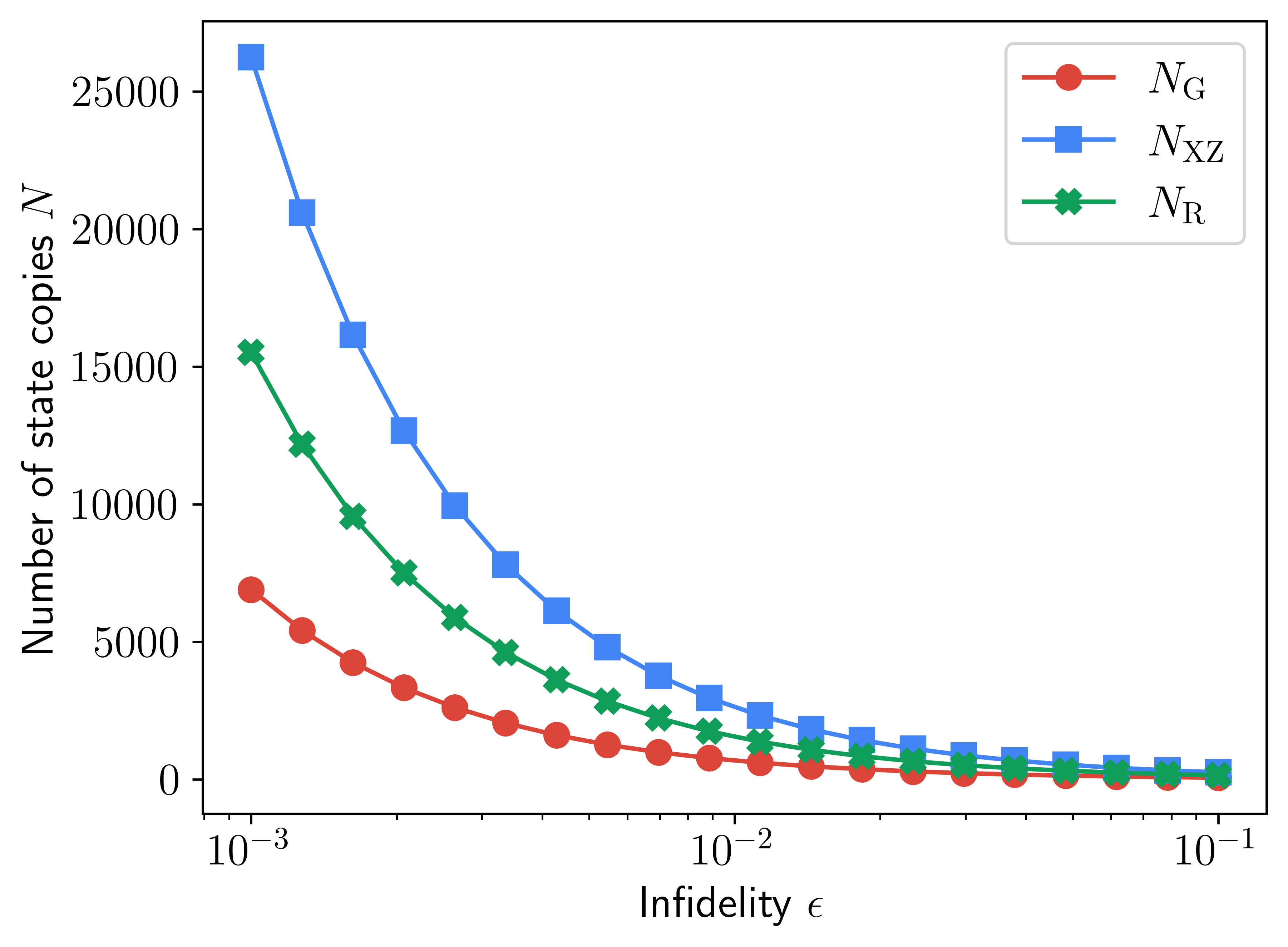}
    \caption{\raggedright 
    Comparison of the total number of state copies required to verify the three-qubit GHZ-W subspace for 
    different strategies as a function of the infidelity $\epsilon$, where $\delta = 0.001$.
    Here, $N_{\rm G}$ is the sample complexity of the globally optimal strategy 
    given in Eq.~\eqref{eq:global-strategy},
    $N_{\rm XZ}$ is the sample complexity of the XZ strategy
    given in Eq.~\eqref{eq:XZ-strategy}, and 
    $N_{\rm R}$ is the sample complexity of the rotation strategy
    given in Eq.~\eqref{eq:rotation-strategy}.}%
    \label{fig:comparison}
\end{figure}

\section{Conclusions}

In this work, we investigated the task of verifying the three-qubit GHZ-W genuinely entangled subspace 
using adaptive local measurements. 
By exploiting the local symmetry properties of the GHZ-W subspace, 
we first designed ten test operators and then constructed two efficient verification strategies: the XZ strategy and the rotation strategy.
The XZ strategy, employing four test operators, requires approximately $3.817\epsilon^{-1}\ln\delta^{-1}$ state copies to achieve a confidence level of $1-\delta$. 
In contrast, the rotation strategy, utilizing all ten test operators, achieves the same confidence level with a reduced sample complexity of $2.248\epsilon^{-1}\ln\delta^{-1}$.
Notably, the sample complexity of the rotation strategy is only approximately twice that of the globally optimal verification strategy, demonstrating its high efficiency. 
Along the way, we comprehensively analyzed the two-dimensional two-qubit subspaces, 
classifying them into three distinct types: unverifiable, verifiable, and perfectly verifiable subspaces. 
Interestingly, we demonstrated the existence of two-qubit entangled subspaces that are inherently unverifiable with local measurements, 
highlighting fundamental limitations in local entanglement verification.

Our findings raise several important open questions. 
A primary challenge lies in formulating and rigorously proving \emph{optimal} verification strategies with local measurements for arbitrary subspaces. 
Moreover, extending our approach to larger GHZ-W subspaces and other entangled subspaces remains an open area of research.

\section*{Acknowledgements}

This work was supported by
the National Key R\&D Program of China (No. 2022YFF0712800),
the National Natural Science Foundation of China (No. 62471126), 
the Jiangsu Key R\&D Program Project (No. BE2023011-2), 
the SEU Innovation Capability Enhancement Plan for Doctoral Students (No. CXJH\_SEU 24083), 
and the Fundamental Research Funds for the Central Universities (No. 2242022k60001).


\end{document}